\documentclass[letterpaper,12pt]{article} 

\usepackage[utf8]{inputenc}
\usepackage{geometry}
\usepackage{setspace}

\usepackage{fullpage}
\usepackage{geometry}

\usepackage{fullpage}
\usepackage[T1]{fontenc}
\usepackage{enumitem}
\usepackage{ae,aecompl}
\usepackage[]{graphicx}
\usepackage{amssymb}
\usepackage{amsmath}
\usepackage[english]{babel}
\usepackage[round,authoryear]{natbib}
\usepackage{epsf}
\usepackage{xcolor}
\usepackage{array}
\usepackage{rotating}
\usepackage{ushort}
\usepackage{adjustbox}
\usepackage{ntheorem}
\usepackage{tikz}
\definecolor{webgreen}{rgb}{0,0.4,0}
\definecolor{webbrown}{rgb}{0.6,0,0}
\definecolor{purple}{rgb}{0.5,0,0.25}
\definecolor{darkblue}{rgb}{0,0,0.7}
\definecolor{darkred}{rgb}{0.7,0,0}
\DeclareMathOperator*{\esssup}{ess\,sup}
\usepackage[pdfborder=false]{hyperref}
\hypersetup{colorlinks,citecolor=darkred,filecolor=black,linkcolor=darkblue,urlcolor=webgreen,pdfpagemode=None,pdfstartview=FitH}
\newcommand{\ignore}[1]{}
\newtheorem{lemma}{{\textsc{Lemma}}}

\newtheorem{cor}{{\textsc{Corollary} }}
\newtheorem{thm}{{\textsc{Theorem} }}
\newtheorem{defn}{{\textsc{Definition} }}

\newtheorem*{theorem-non}{Theorem}
\newenvironment{proof}{\noindent { \textbf{Proof} \/}:\enspace}
{\hfill $\blacksquare{}$ \vspace{12pt}}

\makeatletter
\renewcommand\section{\@startsection {section}{1}{\z@}%
                                   {-3.5ex \@plus -1ex \@minus -.2ex}%
                                   {2.3ex \@plus.2ex}%
                                   {\centering\large\scshape}}
\renewcommand\subsection{\@startsection {subsection}{1}{\z@}%
                                   {-3.5ex \@plus -1ex \@minus -.2ex}%
                                   {2.3ex \@plus.2ex}%
                                   {\raggedright\large\scshape}}
\renewcommand\subsubsection{\@startsection {subsubsection}{1}{\z@}%
                                   {-3.5ex \@plus -1ex \@minus -.2ex}%
                                   {2.3ex \@plus.2ex}%
                                   {\raggedright\itshape}}

\usepackage{nicematrix}

\makeatother

\title{Optimal Sequential Assignment with Capacity Constrained Verification }

\author{
Vilok Taori\thanks{Indian Statistical Institute, Delhi, India.}
}

\begin{document}
\maketitle
\date{}

 \begin{center}
        \large\textbf{Abstract}
    \end{center}
   A principal seeks to allocate $k$ identical objects among $n$ sequentially arriving, impatient agents. Each agent privately observes her valuation, and the principal's payoff from allocating an object depends on the recipient's valuation. The principal can perfectly verify the valuation of at most $m$ agents, where $m < k$, and commits to a mechanism before any agent arrives. We characterize the class of optimal mechanisms. Our main result establishes that, in states where the number of remaining objects exceeds the number of available verification checks, the optimal mechanism may involve probabilistic verification.

  \footnote{ I thank Siddharth Chatterjee, Albin Erlanson, Debasis Mishra, Alessandro Pavan and Arunava Sen for valuable discussions and suggestions.}

   \section{Introduction}
   We study an allocation problem in which a principal assigns $k$ identical objects to a group of impatient agents who arrive sequentially. Each agent desires an object, and the principal’s payoff from allocation depends on the agent’s privately known valuation. Monetary transfers are not available to elicit private information, though valuations are based on verifiable evidence. The principal possesses a verification technology that can perfectly check this evidence, but its capacity is limited to fewer than the number of objects to be allocated. The optimal mechanism that maximises the principal’s total expected payoff must account for the frictions arising from sequential arrival and the capacity constraint on verification. Allocating an object today by verifying an agent may preclude allocating it in the future to a higher-valued agent, while also consuming a verification check that could otherwise be used later.
   
As an illustrative example, consider a firm hiring for a set of identical positions. Applications arrive sequentially, and the firm’s payoff from assigning a position to an applicant depends on privately known but verifiable attributes such as education, work experience, and skills. The firm’s verification resources are limited, with fewer checks available than positions to fill. The firm’s objective is to design an optimal allocation policy.

Assuming agents’ valuations are independently distributed, past decisions can be classified into states, where each state specifies the set of agents yet to arrive, the number of objects remaining, and the number of verification checks available. The mechanism design problem is to determine, given past decisions, whether and how to verify the arriving agent based on her report, and how to use the verification outcome (if obtained) together with the report to allocate an object, while accounting for its implications for future allocations.

The optimal mechanism must balance two opposing forces. In states where the number of objects exceeds the available verification checks, the principal faces a fundamental trade-off: verifying the arriving agent raises the quality of the current allocation by screening out low types, but consumes a check that could have been used to screen future agents. Allocating without verification conserves screening capacity for the future, but admits low-quality agents today. Our main result shows that these two forces exactly cancel in such states — the gain from selective allocation today is precisely offset by the loss in future screening capacity — so that the optimal expected payoff is the same under the two mechanisms: (i) allocate the object to the arriving agent without verification, thereby conserving a check, or (ii) verify the agent if her reported type exceeds a threshold and allocate if the report is truthful. Consequently, any randomisation between these two mechanisms — resulting in probabilistic verification — also yields the optimal payoff. This balance breaks down when the number of objects equals the number of available checks: here, every object can potentially be filled selectively, the opportunity cost of consuming a check is higher, and threshold verification strictly dominates.

   \section{Literature Review}
The literature on sequential assignment began with the paradigmatic analysis by \cite{derman1972sequential}. They consider a model of complete information, where several heterogeneous objects are allocated over time to sequentially arriving agents. Our model differs from theirs as we assume that the arriving agents' types are private information. Also, we restrict attention to the sequential assignment of identical objects.  \cite{gershkov2010efficient} also study the sequential assignment problem where arriving agents' types are private information, but they allow for monetary transfers. They show that the dynamic efficient policy can be implemented by a dynamic analogue of the VCG (Vickrey-Clarke-Groves) mechanism. Our model does not allow for monetary transfers but assumes that private information is based on verifiable evidence. 

Our paper relates to the recent work on allocation problems where private information is based on verifiable evidence. \cite{ben2014optimal} study an optimal allocation problem where a principal desires to allocate a single indivisible good among a group of agents where agent types can be learnt at a cost. Subsequent work has studied costly verification without transfers in collective decisions (\cite{erlanson2020costly}), delegation (\cite{halac2020commitment}), or general mechanism design problems (\cite{ben2019mechanisms}). \cite{erlanson2024optimal} study a static allocation problem with capacity-constrained verification. They consider a setting where the principal has to allocate a set of identical objects among a group of agents who do not arrive sequentially. 

In dynamic contexts, \cite{popov2016stochastic} studies a model of repeated risk sharing with costly verification. \cite{DLib} study a dynamic principal-agent problem with costly verification. The closest related work to ours is \cite{epitropou2019optimal} which studies the optimal allocation of a single indivisible good among sequentially arriving agents when the principal can learn agent types at a cost. Our model differs in two important respects. First, we consider the allocation of multiple identical objects rather than a single good, which introduces a richer set of trade-offs: the principal must decide not only whether to verify the current agent but also how to allocate the remaining verification budget across future arrivals. Second, rather than a cost-based verification technology, we impose a hard capacity constraint on the number of verifications available, which means the principal cannot simply purchase additional information but must instead allocate a fixed and exhaustible screening resource across agents. It is this combination - multiple objects and a constrained verification technology that gives rise to the payoff equivalence between threshold verification and unverified allocation, a phenomenon that has no counterpart in the single-object setting of \cite{epitropou2019optimal}.

The rest of the paper is organised as follows: we present the model in section \ref{section:3}. Section \ref{sec 4} describes the optimal mechanism and contains the main result. In section \ref{sec:6}, we derive the properties of the optimal mechanism. We conclude in section \ref{sec:7}

\section{Model}\label{section:3}
There are $k $ identical positions to be allocated. Each agent $i$ is characterised by a type $t$ that is private information. The agents arrive in a sequential order: first agent $1$ appears, then agent $2$, \ldots, and then agent $n$ where $n > k$. Each agent can only be allocated the position upon arrival. After a position is allocated, it cannot be reallocated in the future.

An agent with type $t$ who is allocated the position receives payoff $u_i(t)$ and a payoff of $0$ if not allocated the position. If a position is allocated to agent with type $t$, then the payoff to the principal is $t$. Agents' types are independently and identically distributed random variables $T_i$ with a common atomless c.d.f $F$, full support on $[0, 1]$ and a mean denoted by $\mu$.\footnote{The results of the paper follow for any non-negative bounded support} Let $T$  be  the product of individual random variables, i.e. $T = \prod_{i \in I}T_{i}$.  The principal cannot use transfers but can perfectly verify the type of $m$ agents, where $m < k$.

The principal's problem is to determine the optimal mechanism that maximises the expected payoff from allocating $k$ positions to $n$ agents who arrive sequentially, given a verification technology that can perfectly learn the types of $m$ agents. The principal commits to the mechanism. Histories of past decsions correspond to the events where some positions are yet to be allocated. As agents' types are independently distributed, all possible histories of past decisions can be classified in terms of different states, where each state is represented by the arriving agent $i$, the number of positions yet to be allocated, and the number of verification checks remaining. For example, the initial history can be represented by the state ($1,k,m$). Let $S_i$ be the collection of all states involving agent $i$ for $i \in \{1,\ldots ,n\}$.

For the optimal mechanism, without loss of generality, we can restrict attention to the following incentive compatible direct mechanism.\footnote{This can be shown using arguments similar in \cite{ben2014optimal}.} Fix a state, and consider an agent who arrives and reports a type. The mechanism then (possibly randomly) decides whether to check this report. If checked, the agent is allocated the position if and only if she reported truthfully. If not checked, the mechanism allocates the position randomly. 

Hence, for any state $s \in S_i$, the mechanism specifies for each report, the probability with which the agent is verified and the probability with which the agent is allocated the position if she is not verified. Let $p(t|s)$ be the probability that the agent with report $t$ is verified and allocated the position if reported truthfully, and $q(t|s)$ be the probability that she is not verified and allocated the position. Since all agents have non-negative types, for any state $s$ in which the number of remaining agents equals the number of available positions, every remaining agent is allocated the position without verification.  For any state $s$ where the number of available checks is $0$, without loss of optimality, the remaining positions are allocated to each arriving agent irrespective of her report. Henceforth, the mechanism will be denoted by $(p,q)$ and its restriction to state $s \in S_i$ will be denoted by $(p(.|s), q(.|s))$. A mechanism also describes the transition probabilities between states. Given a state $s = (i, k',m')$ and a report  $t$, the states $(i+1, k'-1, m'-1)$, $(i+1, k'-1, m')$ and $(i + 1, k', m')$ are reached with probabilities $p(t|s),\;\;q(t|s)$ and $(1-p(t|s)-q(t|s)$ respectively. For every state $s$ and report $t$, the mechanism must satisfy
\[
    p(t|s)\geq 0, \qquad q(t|s)\geq 0, \qquad
    p(t|s)+q(t|s)\leq 1.
\]The initial state and mechanism $(p,q)$ make each state $s \in S_i$ for each $i \in \{2,\ldots,n\}$ a random variable.

The principal's objective is
\begin{equation} \label{object}
    E_T \left[ \sum_{i =1}^{n} \sum_{s \in S_i} (p(t|s) + q(t|s))t\right]
\end{equation}
The incentive compatibility constraints for agent $i$ are
\begin{equation*}
  p(t|s) + q(t|s) \geq   q(t^{'}|s) \;\; \text{for every} \;\; t, t^{'} \in [0,1] \;\; and\; s\in S_i
\end{equation*}
 
\section{Optimal Mechanism}\label{sec 4}

As agents are short-lived, the incentive compatibility conditions impose constraints on the mechanism for a given state and not across states. However, the principal's choice of the mechanism influences the transition probabilities across states. Hence, the principal's objective \ref{object} is a dynamic optimisation problem with a finite number of states.

By the principle of optimality of dynamic programming, the optimal mechanism can be constructed using backward induction. We first determine the optimal incentive-compatible mechanism restricted to all possible final states, then extend the optimal mechanism to all states with a transition to a final state, and continue in this manner until the optimal incentive-compatible mechanism for the initial state is determined.

The principal's objective function can now be restated to determine an optimal incentive-compatible mechanism restricted to each state, given that she will follow the optimal mechanism for all possible states that may be realised later. Let $V(s)$ be the optimal expected value to the principal at state $s$ when the principal commits to the optimal mechanism.

Fix a state $s = (i,k',m')$, and let $c(s)$ denote the difference in optimal expected future value between the case in which agent $i$ is not verified and not allocated the position and the case in which agent $i$ is verified and allocated the position:
\begin{equation} \label{c1}
    c(s) = V(i+1,k',m') - V(i+1,k'-1,m'-1).
\end{equation}
We now define the class of sequential threshold mechanisms.

\begin{defn}
A mechanism $(p,q)$ is a \emph{sequential-threshold mechanism} if its restriction $(p(.|s),q(.|s))$ to every state $s=(i,k',m')$, $m' \geq 1$, satisfies the following conditions.\footnote{Recall that, for any state, if the number of remaining agents equals the number of available positions, every remaining agent is allocated the position without verification.}

\[
p(t \mid s) =
\begin{cases}
0, & t < c(s), \\
1 - \beta(s), & t > c(s),
\end{cases}
\qquad
q(t \mid s) = \beta(s), \text{for all} \, \;t \in [0,1],
\]
where $\beta(s) \in [0,1]$
\end{defn}

Since $F$ is atomless, mechanisms that differ only on an $F$-null set of reports
induce the same expected payoff and the same transition probabilities. Throughout,
we identify such mechanisms. Similarly, the values of $p(\cdot|s)$ and
$q(\cdot|s)$ at cutoff types are payoff-irrelevant and may be chosen arbitrarily.
We are now ready to state our first theorem.

\begin{thm}\label{Thm1}
Every optimal mechanism belongs to the class of sequential threshold mechanisms.
\end{thm}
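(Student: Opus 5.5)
By the principle of optimality invoked above, it suffices to fix a state $s=(i,k',m')$ with $m'\geq 1$ and more remaining agents than positions. We take the continuation values $V(\cdot)$ at all later states as given, and characterise the optimal incentive-compatible restriction $(p(\cdot|s),q(\cdot|s))$. The plan is to write the principal's value at $s$ as a linear functional of $(p,q)$ and reduce it to a pointwise problem.

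Write $a=V(i+1,k',m')$, $b=V(i+1,k'-1,m'-1)$ and $d=V(i+1,k'-1,m')$. Conditional on report $t$, the value at $s$ is
\[
p(t)(t+b)+q(t)(t+d)+(1-p(t)-q(t))a .
\]
Using $c(s)=a-b$ from \eqref{c1} and setting $e(s)=a-d$, the expected value at $s$ becomes
\[
a+E\bigl[p(t)(t-c(s))+q(t)(t-e(s))\bigr].
\]
The first step is to observe that $d\geq b$, since having an extra check can only help; hence $e(s)\leq c(s)$.

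The second step is to pin down the shape of $q$. Incentive compatibility requires $p(t)+q(t)\geq \sup_{t'}q(t')$. Let $\beta=\sup_{t'} q(t')$, essentially. For fixed $\beta$, any type with $q(t)<\beta$ must receive $p(t)\geq \beta-q(t)$. For types with $t<c(s)$, verification gives the negative coefficient $t-c(s)$, while unverified allocation gives the coefficient $t-e(s)\geq t-c(s)$. So shifting mass from $p$ to $q$ (keeping $q\le\beta$) weakly improves the objective and preserves incentive compatibility. Setting $p(t)=0$ and $q(t)=\beta$ is therefore optimal, and any remaining $p(t)>0$ below $c(s)$ is strictly worse, as is $q(t)<\beta$ there. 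For $t>c(s)$, the coefficient on $p$ exceeds the coefficient on $q$ by $c(s)-e(s)\geq 0$, and $p$ is positive. The optimum therefore sets $p(t)+q(t)=1$ and moves as much weight as possible to $p$, which gives $q(t)=\beta$ under the IC floor—more precisely, $q(t)\le \beta$ by definition of $\beta$. This delivers $q\equiv \beta$ and $p=(1-\beta)\mathbf 1\{t>c(s)\}$ whenever the inequalities are strict.

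The main obstacle is strictness, which is needed because the theorem says \emph{every} optimal mechanism is of this form. Where $c(s)=e(s)$, a type above $c(s)$ could have $q(t)>$ elsewhere-level without loss. I would handle this by proving $c(s)>e(s)$, i.e.\ $V(i+1,k'-1,m')>V(i+1,k'-1,m'-1)$, whenever a further check has strictly positive option value. This should follow by induction over states: at any later state with checks and more agents than positions, replacing unverified allocation by threshold verification for low types strictly gains because $F$ has full support. If $c(s)=e(s)$ can occur, for instance when $k'-1$ equals the number of remaining agents, then the $p$ versus $q$ split above $c(s)$ is payoff-irrelevant. I would then argue that the state is degenerate, since no check is ever used, and the paper's convention identifies such mechanisms. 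Finally, I would check that $p(t)+q(t)=1$ for $t>c(s)$, which requires $t-e(s)>0$ there. That holds since $t>c(s)\geq e(s)$, with $c(s)\ge 0$ because $a\ge b$.
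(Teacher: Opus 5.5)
Your route is the paper's: write the value at $s$ as $a+E[p(t)(t-c(s))+q(t)(t-e(s))]$, use the appendix lemma to get $e(s)\le c(s)$, replace the IC family by a single scalar $\beta$ with $q\le\beta\le p+q$, and optimise type by type. There is a sign slip in the step for $t>c(s)$. Because $e(s)\le c(s)$, it is the coefficient on $q$, namely $t-e(s)$, that exceeds the coefficient on $p$ by $c(s)-e(s)$. Taken literally, your instruction to ``move as much weight as possible to $p$'' yields $p(t|s)=1$, $q(t|s)=0$ above the cutoff. That is not of sequential-threshold form when $\beta>0$. The correct reasoning is that $t-c(s)>0$ forces $p+q=1$, and weight is then shifted to $q$ until it reaches its cap $\beta$. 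That cap comes from the definition of $\beta$, not from the IC floor, which bounds $p+q$ from below.

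The substantive gap is in your strictness step. You are right that ``every optimal mechanism'' requires the $p$/$q$ split to be pinned down, and the pointwise argument delivers this only when $c(s)>e(s)$. (The paper's own proof also only shows that $q=\beta$ is \emph{an} optimal choice.) But you misidentify the equality case. At a non-terminal state at least $k'$ agents follow agent $i$, so your example, with $k'-1$ equal to the number of remaining agents, never arises. Equality occurs exactly when $k'=m'$. The successor of an unverified allocation is then $(i+1,k'-1,k')$, whose extra check is redundant, so $V(i+1,k'-1,m')=V(i+1,k'-1,m'-1)$ and $e(s)=c(s)$. These states are not degenerate, since checks are used there, so the appeal to degeneracy and to the null-set convention fails. (When $k'>m'$ you have $m'\le k'-1$ with more agents than positions at $i+1$, so $d>b$ holds strictly and your argument goes through.)

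To close the $k'=m'$ case you need a comparison across $\beta$. There the objective depends on $(p,q)$ only through $\sigma=p+q$. Optimality forces $\sigma=\beta$ below $c(s)$ and $\sigma=1$ above, giving value $a+\beta\int_0^{c(s)}(t-c(s))\,dF(t)+\int_{c(s)}^1(t-c(s))\,dF(t)$. This is strictly decreasing in $\beta$ because $c(s)>0$ and $F$ has full support. Hence every optimal mechanism has $\beta(s)=0$, so $q(\cdot|s)=0$ a.e.\ and $p(\cdot|s)=\mathbf{1}\{t>c(s)\}$, which is of sequential-threshold form. This is the computation in part (i) of the paper's next theorem, and your proof needs it here.
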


 To build intuition for the proof, note that the incentive constraints 
require the total allocation probability at any given type to weakly exceed the unverified 
allocation probability at any other type. This infinite family of pairwise 
constraints turns out to be equivalent to the existence of a \textbf{single scalar} 
$\beta \in [0,1]$, the supremum of the unverified allocation probability 
across all types, such that the unverified allocation probability at every type is 
at most $\beta$ and the total allocation probability at every type is at least 
$\beta$.

With $\beta$ fixed, the problem \textbf{separates across types}: since both the 
objective and the constraints involve the choices at each type independently of the 
choices at any other type, the principal can optimise pointwise at each type 
separately. This pointwise optimisation has a clean solution: it is optimal to set 
the unverified allocation probability equal to $\beta$ at every type, and to verify 
if and only if the reported type exceeds the threshold $c(s) = V(i+1,k',m') - 
V(i+1,k'-1,m'-1)$. The threshold emerges directly from the sign of $t - c(s)$: 
above the threshold, verification raises the principal's payoff and so is worthwhile; 
below the threshold, it does not.
The resulting mechanism takes a sequential threshold form at every state.

\begin{proof}
    
Fix a non-terminal state $s = (i, k', m')$ with $1 \leq m' \leq k'$.
 The principal's payoff from type $t$ at state $s$ is
\begin{eqnarray}
    \Pi(t)
&=&
p(t|s)\bigl(t + V(i+1,k'-1,m'-1)\bigr) \nonumber \\
&+& q(t|s)\bigl(t + V(i+1,k'-1,m')\bigr) \nonumber \\
&+& (1-p(t|s)-q(t|s))\,V(i+1,k',m') \nonumber
\end{eqnarray}
Defining
\[
\alpha_1(t) := t + V(i+1,k'-1,m'-1) - V(i+1,k',m') = t - c(s),
\]
\[
\alpha_2(t) := t + V(i+1,k'-1,m') - V(i+1,k',m'),
\]
the payoff simplifies to
\[
\Pi(t) = V(i+1,k',m') + \alpha_1(t)\,p(t|s) + \alpha_2(t)\,q(t|s).
\]
Observe, $\alpha_1(t) = t - c(s)$,  is 
positive above the threshold $c(s)$ and negative below it. 

The incentive constraint
\[
p(t \mid s) + q(t \mid s) \geq q(t' \mid s)
\quad \text{for every }  t, t'
\]
This is equivalent to the existence of a 
scalar $\beta \in [0,1]$ such that
\[
q(t|s) \leq \beta
\qquad \text{and} \qquad
p(t|s) + q(t|s) \geq \beta
\quad \text{for every }  t.
\]
Let $\beta = \esssup_{t^{'}} q(t'|s)$,  the highest unverified allocation probability across all types. Then,
 $q(t|s) \leq \beta$ holds by definition of the supremum, and $p(t|s) + q(t|s) \geq \beta$ follows from the incentive constraint, since the total allocation probability at any $t$ must weakly exceed the unverified allocation probability at any other type, including the type achieving the supremum.

\begin{figure}[htbp]
\centering
\begin{tikzpicture}[scale=6]

  \def\beta{0.5} 

  \fill[blue!25] (\beta,0) -- (1,0) -- (1-\beta,\beta) -- (0,\beta) -- cycle;

  \draw[->] (-0.05,0) -- (1.15,0) node[right] {$p(t|s)$};
  \draw[->] (0,-0.05) -- (0,1.15) node[above] {$q(t|s)$};

  \draw[thick] (\beta,0) -- (1,0);          
  \draw[thick] (1,0) -- (1-\beta,\beta);    
  \draw[thick] (1-\beta,\beta) -- (0,\beta);
  \draw[thick] (0,\beta) -- (\beta,0);      

  \draw[dashed, gray] (0,1) -- (1,0);       
  \draw[dashed, gray] (0,\beta) -- (1.05,\beta); 

  \draw (1,0.015) -- (1,-0.015) node[below] {$1$};
  \draw (0.015,1) -- (-0.015,1) node[left] {$1$};
  \draw (\beta,0.015) -- (\beta,-0.015) node[below] {$\beta$};
  \draw (0.015,\beta) -- (-0.015,\beta) node[left] {$\beta$};

  \fill (\beta,0) circle (0.4pt);
  \fill (1,0) circle (0.4pt);
  \fill (1-\beta,\beta) circle (0.4pt);
  \fill (0,\beta) circle (0.4pt);

  \node at (0.55,0.22) {\small Feasible region};

\end{tikzpicture}
\caption{Feasible region defined by $p(t|s)\ge 0$, $q(t|s)\ge 0$, $p(t|s)+q(t|s)\le 1$, $q(t|s)\le \beta$, and $p(t|s)+q(t|s)\ge \beta$ for $\beta =0.5$.}
\label{fig:feasible_region}
\end{figure}
The principal's objective function can now be restated as 
\[
\max_{\beta \in [0,1]} \max_{\substack{q(t|s)\leq \beta,\\ p(t|s) + q(t|s) \geq \beta}}\int_0^{1} \bigl[V(i+1,k',m') + \alpha_1(t)\,p(t|s) + \alpha_2(t)\,q(t|s)\bigr]\,dF(t)
\]
Since both the objective and the constraints, for fixed $\beta$, involve the choices 
at each type $t$ independently of the choices at any other type, the problem 
separates across types. For each $t$ independently, we choose 
$(p(t|s),q(t|s))$ subject to $q(t|s) \leq \beta$ and $p(t|s) + q(t|s) \geq \beta$ 
to maximise $V(i+1,k',m') + \alpha_1(t)\,p(t|s) + \alpha_2(t)\,q(t|s)$.

Let $\sigma := p(t|s) + q(t|s) \in [\beta, 1]$ denote the total allocation 
probability. For fixed $\sigma$, the objective is
\[
V(i+1,k',m') + \alpha_1(t)\,\sigma 
+ \bigl(V(i+1,k'-1,m') - V(i+1,k'-1,m'-1)\bigr)\,q(t|s).
\]
Lemma~\ref{eqn:extra} in the Appendix establishes that the optimal value function 
 is increasing in the number of verification checks, with equality attained precisely when the number of remaining agents equals the number of available objects; hence $V(i+1,k'-1,m') - V(i+1,k'-1,m'-1) \geq 0$. The 
objective is increasing in $q(t|s)$, so it is optimal to set $q(t|s) = \beta$. 
This gives $p(t|s) = \sigma - \beta$, and the remaining problem is
\[
\max_{\sigma \in [\beta,1]}\;
V(i+1,k',m') + \alpha_1(t)\,\sigma 
+ \bigl(V(i+1,k'-1,m') - V(i+1,k'-1,m'-1)\bigr)\,\beta.
\]
The coefficient of $\sigma$ is $\alpha_1(t) = t - c(s)$. Hence:
\begin{enumerate}[label=\roman*)]
    \item  If $t < c(s)$: $\alpha_1(t) < 0$, so set $\sigma = \beta$, giving 
    $p(t|s) = 0$ and $q(t|s) = \beta$.
    \item  If $t > c(s)$: $\alpha_1(t) > 0$, so set $\sigma = 1$, giving 
    $p(t|s) = 1-\beta$ and $q(t|s) = \beta$.
    \item  At $t = c(s)$: every $\sigma \in [\beta,1]$ is optimal; since $F$ is 
    atomless, this point is payoff-irrelevant.
\end{enumerate}

Hence, proved.
\end{proof}

The optimal mechanism must specify $\beta(s)$ at every state $s$.
The value induced at state $s$ by a choice of $\beta \in [0,1]$ is
\begin{align*}
\Psi_s(\beta)
&=
\int_0^{c(s)}
\left[
\beta\bigl(t + V(i+1,k'-1,m')\bigr) + (1-\beta)V(i+1,k',m')
\right]dF(t) \\
&\quad +
\int_{c(s)}^{1}
\left[
t + (1-\beta)V(i+1,k'-1,m'-1) + \beta V(i+1,k'-1,m')
\right]dF(t).
\end{align*}
Hence, the optimal mechanism is characterised by pointwise optimising at every state $s$:
\[
\max_{\beta \in [0,1]} \Psi_s(\beta).
\]
\noindent
We now define a sub-class of sequential threshold mechanisms:

\begin{defn}
A mechanism $(p,q)$ is a \emph{aligned-sequential threshold mechanism} if its restriction $(p(.|s),q(.|s))$ to every state $s=(i,k',m')$, $m' \geq 1$, satisfies the following conditions.

\[
p(t \mid s) =
\begin{cases}
0, & t < c(s), \\
1 - \beta(s), & t > c(s),
\end{cases}
\qquad
q(t \mid s) = \beta(s), \text{for all}\; t \in [0,1]
\]
where $\beta(s) = \begin{cases}
0, & k'=m', \\
\rho(s)\in[0,1], & k'>m',
\end{cases}$
\end{defn}
\noindent
We are now ready to state our main theorem:
\begin{thm}\label{Thm1}
Every aligned-sequential threshold mechanism is an optimal mechanism. 
\end{thm}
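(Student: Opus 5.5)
The plan is to use Theorem~\ref{Thm1} (the first one) to restrict attention to sequential threshold mechanisms. Every optimal mechanism has this form, so the only free objects are the scalars $\beta(s)$. Since $V$ is defined by backward induction, a mechanism is optimal iff at every state $s$ its $\beta(s)$ maximises $\Psi_s(\beta)$. The statement therefore reduces to two claims about the linear function $\Psi_s$: it is \emph{constant} in $\beta$ when $k'>m'$, so every $\rho(s)\in[0,1]$ is optimal; and it is \emph{weakly decreasing} when $k'=m'$, so $\beta=0$ is optimal.

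\textbf{Computing the slope.} Write $c=c(s)$, and set $E(s)=V(i+1,k'-1,m')-V(i+1,k'-1,m'-1)$ for the marginal value of one check after an allocation. Differentiating $\Psi_s$ gives
\[
\Psi_s'(\beta)=\int_0^{c}\bigl[t+V(i+1,k'-1,m')-V(i+1,k',m')\bigr]dF(t)+\int_c^1 E(s)\,dF(t).
\]
Since $V(i+1,k'-1,m')-V(i+1,k',m') = E(s)-c$, this collapses to
\[
\Psi_s'(\beta)=E(s)-\int_0^{c}(c-t)\,dF(t)=E(s)-\mathbb{E}\bigl[(c-T)^+\bigr].
\]
So the theorem is equivalent to an identity and an inequality:
\begin{itemize}
\item \emph{identity:} the marginal value of a check equals the option value $\mathbb{E}[(c(s)-T)^+]$ whenever $k'>m'$;
\item \emph{inequality:} $E(s)\le \mathbb{E}[(c(s)-T)^+]$ when $k'=m'$.
\end{itemize}

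\textbf{Backward induction on $i$.} I will prove both claims by backward induction on $i$, jointly with the consequence for $V$. The consequence is that whenever $k>m$, both $\beta=0$ and $\beta=1$ are optimal at $(j,k,m)$. Taking $\beta=1$ gives the convenient representation
\[
V(j,k,m)=\mu+V(j+1,k-1,m)\qquad (k>m).
\]
Taking $\beta=0$ at states with $k=m$ gives
\[
V(j,m,m)=\int_0^{c}V(j+1,m,m)\,dF+\int_c^1\bigl(t+V(j+1,m-1,m-1)\bigr)dF.
\]
In the inductive step at state $(i,k',m')$, I substitute these representations (valid at stage $i+1$ by hypothesis) into $E(s)$ and $c(s)$. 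This expresses both sides of the identity through stage-$(i+2)$ quantities, where the hypothesis again applies. Two cases arise.
\begin{itemize}
\item If $k'-1>m'$, both terms of $E(s)$ and of $c(s)$ shift by the common $\mu$-allocation. Hence $E(s)$ and $c(s)$ coincide with $E$ and $c$ at the state $(i+1,k'-1,m')$, and the identity is inherited.
\item If $k'-1=m'$, the term $V(i+1,m',m')$ involves the threshold recursion. Here I must directly check that $E(s)=\int_0^c (c-t)\,dF$, using the $\beta=1$ representation of $V(i+1,m',m'-1)$.
\end{itemize}

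\textbf{Base cases.} These are the states where the remaining agents equal the remaining objects, where the value does not depend on checks, and the states with $m'=0$. At both, $V$ is explicit: the first is a sum of $\mu$'s, and in the second case allocation is unconditional. Lemma~\ref{eqn:extra} supplies the sign information needed there. It also supplies the equality case, so that at $k'=m'$ the slope $E(s)-\mathbb{E}[(c-T)^+]$ is $\le 0$.

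\textbf{Main obstacle.} I expect the case $k'-1=m'$ to be the main obstacle: there one has to show that an extra check at an aligned state is worth \emph{exactly} the one-period option value. My first attempt will be to write $V(i+1,m',m')-V(i+1,m',m'-1)$ using the threshold recursion for the first term and the $\beta=1$ recursion for the second. I will then try to verify that the cutoffs telescope, if necessary strengthening the induction hypothesis to an explicit closed form for $c$ along the diagonal $k=m$.
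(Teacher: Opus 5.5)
Your proposal is correct and is essentially the paper's argument. Both run a backward induction that proves the zero-slope identity jointly with $V(i,k',m')=\mu+V(i+1,k'-1,m')$ for $k'>m'$, and both are driven by the shift $c(s)=c(i+1,k'-1,m')$. Your split into $k'-1>m'$ (inheritance) and $k'-1=m'$ (direct computation) repackages the paper's single step, which uses the threshold recursion at $(i+1,k'-1,m')$. The case you flag as the main obstacle closes on your first attempt, with no closed form along the diagonal needed. The shift gives $c(s)=c(i+1,m',m')$, and then $E(s)=F(c)c+\int_c^1 t\,dF-\mu=\mathbb{E}\bigl[(c-T)^+\bigr]$.

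One correction concerns the case $k'=m'$. The inequality there does not come from Lemma~\ref{eqn:extra}, whose inequality only gives $E(s)\ge 0$, the wrong direction, and whose equality case concerns states where remaining agents equal remaining positions. It comes from the redundancy of checks in excess of positions: $V(i+1,m'-1,m')=V(i+1,m'-1,m'-1)$, so $E(s)=0$ and the slope is $-\mathbb{E}\bigl[(c-T)^+\bigr]\le 0$.
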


We provide an intuitive justification of the payoff equivalence between the different aligned-sequential threshold mechanisms. Consider the state $s = (i, k', m')$ where the number of positions to allocate strictly exceeds the number of available verification checks and the following deterministic sequentially aligned-threshold mechanism restricted to state $s$,
\begin{enumerate}[label=\roman*.]
    \item \textbf{Allocate without verifying ($\beta(s) =1$):} The principal allocates the position to the agent without using the verification check, conserving it for future agents.
    \item \textbf{Threshold verification ($\beta(s) =0$):} The principal verifies the agent if and only if her reported type exceeds the cut-off $c(s)$, allocates the position if the report is confirmed truthful, and does not allocate otherwise.
\end{enumerate}

Theorem \ref{Thm1} establishes that whenever the number of positions exceeds the available verification checks, the expected payoff from threshold verification equals the expected payoff from allocating without verification. The payoff equivalence between the two mechanisms reflects a precise balancing of two opposing forces. In the \textbf{threshold verification} mechanism restricted to state $s$, the principal is selective: she screens out low types below the cut-off $c(s)$ and only allocates to agents whose types are confirmed to exceed $c(s)$. This selectivity raises the average quality of the current allocation. However, it comes at a cost: using the verification check now leaves fewer checks for future agents, reducing the principal's ability to screen future arrivals. The principal is therefore trading quality today against screening capacity tomorrow. In the \textbf{allocate without verifying} mechanism restricted to state $s$ , the principal conserves the verification check for future agents, preserving screening capacity.
However, she must allocate the current position without any information, inevitably admitting low-quality agents. The saved check improves future screening, but the current allocation is of lower average quality. The reason these two forces exactly cancel is that the value of a verification check is precisely equal to the value of being able to allocate one additional position selectively. 

This balance is specific to states where $k' > m'$, i.e.\ where positions outnumber checks. In such states, the principal will eventually have to allocate some positions without verification, regardless, so the option value of saving a check is limited. When $k' = m'$, an unverified
allocation leaves $k'-1$ positions and  $k'$ checks, so one check is redundant; a verified allocation instead leaves
 $k'-1$ of each, without reducing future selective capacity. This absence of a continuation cost is why threshold verification dominates.
\noindent
We now proceed to prove the Theorem.

\begin{proof}
Fix a non-terminal state $s = (i, k', m')$ with $1 \leq m' \leq k'$, we will prove the following:
\begin{enumerate}[label=\roman*)]
    \item If $k' = m'$, then $\Psi_s(\beta)$ is uniquely maximised at $\beta = 0$. 
    Hence \textbf{threshold verification} is uniquely optimal.
    \item If $k' > m'$, then $\Psi_s(\beta)$ is constant on $[0,1]$. Hence 
    \textbf{threshold verification}, \textbf{allocation without verification}, and 
    every probabilistic mixture between the two are payoff equivalent and optimal.
    \end{enumerate}
Since $\Psi_s(\beta)$ is affine in $\beta$, we can write
\[
\Psi_s(\beta) - \Psi_s(0) = \beta D_s,
\]
where $D_s$ is the derivative of $\Psi_s$ with respect to $\beta$, computed as:
\begin{align*}
D_s 
&= \int_0^{c(s)} (t + V(i+1,k'-1,m') - V(i+1,k',m'))\,dF(t) \\
&+ \int_{c(s)}^{1} (V(i+1,k'-1,m') - V(i+1,k'-1,m'-1))\,dF(t)
\end{align*}
Using $c(s) = V(i+1,k',m') - V(i+1,k'-1,m'-1)$, this simplifies to:
\[
D_s = \int_0^{c(s)} t\,dF(t) + (1-F(c(s)))c(s) - (V(i+1,k',m') - V(i+1,k'-1,m'))
\]
The sign of $D_s$ determines the optimal $\beta$: if $D_s < 0$ then $\beta = 0$ is 
uniquely optimal; if $D_s = 0$ then every $\beta \in [0,1]$ is optimal; and if 
$D_s > 0$ then $\beta = 1$ is uniquely optimal.

\medskip
\noindent\textbf{Part (i): $k' = m'$.}
After an unverified allocation, the successor state is $(i+1, k'-1, k')$, which has 
$k'-1$ positions but $k'$ checks. Since checks cannot exceed positions, the extra 
check is redundant:
\[
V(i+1,k'-1,k') = V(i+1,k'-1,k'-1),
\]
and so, since $m' = k'$:
\[
V(i+1,k'-1,m') = V(i+1,k'-1,m'-1).
\]
Therefore $V(i+1,k',m') - V(i+1,k'-1,m') = c(s)$, and:
\[
D_s
=
\int_0^{c(s)} t\,dF(t) + (1-F(c(s)))c(s) - c(s)
=
\int_0^{c(s)} (t - c(s))\,dF(t) < 0,
\]
where the strict inequality holds because $c(s) > 0$ and $F$ has full support, so 
the integrand $t - c(s)$ is strictly negative on a set of positive measure. Hence 
$\Psi_s(\beta)$ is strictly decreasing in $\beta$, and $\beta = 0$ --- threshold 
verification --- is uniquely optimal.

\noindent\textbf{Part (ii): $k' > m'$.}
We prove by backward induction on the number of remaining agents the following two statements for every state $(i, k', m')$ with $k' > m'$:
\begin{enumerate}[label=(\roman*)]
    \item $D_s = 0$.
    \item $V(i, k', m') = \mu + V(i+1, k'-1, m')$.
\end{enumerate}
Statement (ii) says that whenever positions outnumber checks, the value of the 
optimal mechanism equals the value of allocating the current position without 
verification and continuing optimally.

We first prove the base case, when the number of remaining agents equals the number of remaining positions. Every 
remaining agent must receive a position, so verification is payoff-irrelevant and
\[
V(i, k', m') = k'\mu = \mu + V(i+1, k'-1, m').
\]
Statement (ii) holds. Since the allocation decision is forced regardless of $\beta$, 
the value does not depend on $\beta$, so $D_s = 0$ trivially and Statement (i) 
holds.

We now consider the inductive step.
Fix a state $s = (i, k', m')$ with $k' > m'$ and assume both (i) and (ii) hold for all states that can transit from state $s$. We establish (i) and (ii) at state $s$.
By the induction hypothesis, for the states $(i+1,k',m')$ and $(i+1,k'-1,m'-1)$, we have 
\[
V(i+1,k',m') = \mu + V(i+2,k'-1,m'),
\qquad
V(i+1,k'-1,m'-1) = \mu + V(i+2,k'-2,m'-1).
\]
Subtracting $V(i+1,k'-1,m'-1)$ from $V(i+1, k', m')$, we obtain
\begin{equation}\label{eqp1}
c(s) = V(i+1,k',m') - V(i+1,k'-1,m'-1) = V(i+2,k'-1,m') - V(i+2,k'-2,m'-1).
\end{equation}
Hence, the threshold at state $s$ equals the threshold at the transition state 
$(i+1,k'-1,m')$.
For this state, either $k'-1 = m'$ (in which case Part (i) 
applies) or $k'-1 > m'$ (in which case the induction hypothesis applies). In both 
cases, threshold verification with threshold $c(s)$ is optimal, giving:
\begin{equation}\label{eqnp2}
V(i+1,k'-1,m') 
= F(c(s))\,V(i+2,k'-1,m') + \int_{c(s)}^{1} \bigl[t + V(i+2,k'-2,m'-1)\bigr]\,dF(t).
\end{equation}
Now,
\begin{align}
&V(i+1,k',m') - V(i+1,k'-1,m') \nonumber\\
&= \mu + V(i+2,k'-1,m') 
- F(c(s))\,V(i+2,k'-1,m') 
- \int_{c(s)}^{1} \bigl[t + V(i+2,k'-2,m'-1)\bigr]\,dF(t)\nonumber \\
&= \mu + c(s) - F(c(s))\,c(s) - \int_{c(s)}^{1} t\,dF(t) \nonumber\\
&= \int_0^{c(s)} t\,dF(t) + (1-F(c(s)))\,c(s) \label{eqp3}
\end{align}
where the first equality follows from the induction hypothesis and equation \ref{eqnp2}, the second equality follows from equation \ref{eqp1} and the last equality uses $\mu = \int_0^{c(s)} t\,dF(t) + \int_{c(s)}^{\bar{t}} t\,dF(t)$.
Substituting the expression \label{eqp3} for $V(i+1,k',m') - V(i+1,k'-1,m')$ into $D_s$, we obtain
\[
D_s 
= \int_0^{c(s)} t\,dF(t) + (1-F(c(s)))\,c(s) 
- \int_0^{c(s)} t\,dF(t) - (1-F(c(s)))\,c(s) = 0.
\]
Hence, statement (i) is established.
Since $D_s = 0$, every $\beta \in [0,1]$ is optimal. In particular, $\beta = 1$,
allocation without verification is optimal, giving:
\[
\Psi_s(1) = \int_0^{1} \bigl(t + V(i+1,k'-1,m')\bigr)\,dF(t) = \mu + V(i+1,k'-1,m').
\]
Hence $V(i,k',m') = \mu + V(i+1,k'-1,m')$, establishing Statement (ii) at the 
current state. This completes the induction step and the proof of the Theorem.
\end{proof}

Using the two theorems above, we can characterize the class of optimal mechanisms.
\begin{cor}
A mechanism is optimal if and only if it is an aligned sequential-threshold
mechanism.
\end{cor}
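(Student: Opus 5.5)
The corollary follows by combining the two theorems. I will prove the two directions separately.

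\emph{Sufficiency.} The second theorem states that every aligned-sequential threshold mechanism is optimal. Nothing further is needed here, except to note that "optimal" is meant up to the identification of mechanisms that differ on $F$-null sets of reports.

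\emph{Necessity.} Let $(p,q)$ be an optimal mechanism. By the first theorem, $(p,q)$ is a sequential threshold mechanism: at every state $s=(i,k',m')$ with $m'\ge 1$ its restriction has the threshold form with cutoff $c(s)$ and some weight $\beta(s)\in[0,1]$. It remains to show that $\beta(s)=0$ whenever $k'=m'$. No condition is required when $k'>m'$, since the aligned class allows any $\rho(s)\in[0,1]$ there.

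The step I expect to need the most care is passing from global optimality to optimality at each state. Globally optimal mechanisms need only be optimal at states that are reached with positive probability. So I will argue as follows. Suppose some state $s$ with $k'=m'$ is reached with positive probability under $(p,q)$ and has $\beta(s)>0$. By part (i) of the proof of the second theorem, $D_s<0$, so $\Psi_s(\beta(s))<\Psi_s(0)$. Replace $\beta(s)$ by $0$ at $s$ and leave the mechanism unchanged at all other states, where it is optimal in the continuation sense. This strictly increases the value at $s$. Because $s$ is reached with positive probability, it also strictly increases the ex ante payoff, contradicting optimality.

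For this argument to be clean, the continuation values after $s$ must be the optimal $V$'s. That is exactly what the backward-induction characterisation $\max_\beta \Psi_s(\beta)$ assumes: an optimal mechanism attains $V$ at every state reached with positive probability. Hence $\beta(s)=0$ at every such state with $k'=m'$.

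States reached with probability zero (and, as for null sets of types, their specification) are payoff-irrelevant. I will interpret the identification convention to cover them, so on those states we may set $\beta(s)=0$ without loss. With this, $(p,q)$ is an aligned-sequential threshold mechanism, which completes the necessity direction.
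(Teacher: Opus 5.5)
Your route is the paper's. Sufficiency is the second theorem. Necessity is the first theorem together with the strict inequality $D_s<0$ from part (i) of the second theorem's proof, and that inequality is what excludes $\beta(s)>0$ at states with $k'=m'$. The paper reads optimality statewise (attaining $V(s)$ at every state, via backward induction), so it never discusses reach probabilities. Your observation is correct that, under an ex-ante reading, the ``only if'' direction can only hold modulo states reached with probability zero, so some convention like yours is unavoidable there.

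One step fails as written. You set $\beta(s)=0$ and leave the mechanism unchanged elsewhere. But the successors $(i+1,k',m')$ and $(i+1,k'-1,m'-1)$ may have probability zero under $(p,q)$, for instance when $\beta(s)=1$. Your fact that an optimal mechanism attains $V$ at positively reached states then says nothing about them. So the modified value at $s$ need not equal $\Psi_s(0)$, and need not exceed the original value.

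The repair is short:
\begin{enumerate}
\item The value at $s$ under $(p,q)$ is at most $\Psi_s(\beta(s))$, because each continuation value is at most the corresponding $V$ and enters with a nonnegative weight.
\item Hence it is strictly below $\Psi_s(0)=V(s)$.
\item Replace the mechanism at every state of agents $i,\dots,n$ by an aligned sequential-threshold mechanism, which is optimal at every state by the second theorem. This leaves the probabilities of reaching agent-$i$ states unchanged, weakly raises the value at each of them, and strictly raises it at $s$, contradicting optimality.
\end{enumerate}
You should modify all later states, not just the successors of $s$, because the mechanism is defined on states that several agent-$i$ states lead into.
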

\section{Properties of the Threshold Verification Mechanism}\label{sec:6}
We now provide a recursive formula to determine the cutoff and study how the cutoff of the threshold-verification representative varies
across non-trivial states, that is, states in which the number of remaining agents
strictly exceeds the number of remaining positions. Write $r$ for the number of
remaining agents. Thus the value function and cutoff are denoted by
$V(r,k',m')$ and $c(r,k',m')$.

For non-trivial states with $r>k'\geq m'\geq 1$, the threshold-verification
representative satisfies
\[
V(r,k',m')
=
\int_{c(r,k',m')}^1
\left[t+V(r-1,k'-1,m'-1)\right]\,dF(t)
+
F(c(r,k',m'))V(r-1,k',m'),
\]
and the cutoff is
\[
c(r,k',m')
=
V(r-1,k',m')-V(r-1,k'-1,m'-1).
\]

We use the following boundary conventions. If the number of remaining agents equals
the number of remaining positions, then every remaining agent must be allocated, so
\[
V(k',k',m')=k'\mu.
\]
For the purpose of the recursive cutoff formula below, set
\[
c(k',k',m')=0
\qquad\text{for } k'\geq m'\geq 1.
\]
If there are no verification checks left, set
\[
V(r,k',0)=k'\mu
\qquad\text{and}\qquad
c(r,k',0)=1
\]
for all $r>k'\geq 0$. The convention $c(r,k',0)=1$ is only a boundary convention:
with no checks left, there is no actual verification cutoff.

\begin{thm}
 For each non-trivial
state with $r>k'\geq m'\geq 1$,  the following comparative statics hold.

\begin{enumerate}
    \item The cutoff increases with the number of remaining agents:
    \[
    c(r+1,k',m')>c(r,k',m').
    \]
    \item Whenever $r>k'+1$, the cutoff decreases with the number of remaining
    positions:
    \[
    c(r,k'+1,m')<c(r,k',m').
    \]
    \item Whenever $k'>m'$, the cutoff decreases with the number of remaining
    verification checks:
    \[
    c(r,k',m'+1)<c(r,k',m').
    \]
\end{enumerate}
\end{thm}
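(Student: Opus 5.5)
The plan is to reduce every cutoff to a cutoff on the diagonal $k'=m'$ and then run a single joint induction on the number of remaining agents. Two identities from the previous section do the reduction.

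\textbf{Step 1 (one-step recursion).} Write $\varphi(c):=\int_c^1 (t-c)\,dF(t)$. This function is continuous and strictly decreasing on $[0,1]$ by full support. Substituting $c(r,k',m')=V(r-1,k',m')-V(r-1,k'-1,m'-1)$ into the recursion for the threshold-verification representative gives
\[
V(r,k',m')=V(r-1,k',m')+\varphi\bigl(c(r,k',m')\bigr).
\]

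\textbf{Step 2 (collapse to the diagonal).} By Statement (ii) in the proof of Theorem~\ref{Thm1}, when $k'>m'$ we have $V(r,k',m')=\mu+V(r-1,k'-1,m')$. Iterating gives
\[
V(r,k',m')=(k'-m')\mu+V\bigl(r-(k'-m'),m',m'\bigr).
\]
Applying this to both terms in the definition of $c(r,k',m')$ should yield
\[
c(r,k',m')=d\bigl(r-(k'-m'),m'\bigr),\qquad d(R,m):=c(R,m,m).
\]
Here $R=r-(k'-m')$ is the effective number of remaining agents. I will check that the boundary conventions $c(k',k',m')=0$ and $V(k',k',m')=k'\mu$ are consistent with this reduction.

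Once Step 2 holds, the three parts become statements about $d$:
\begin{itemize}
\item Part 1 is $d(R+1,m)>d(R,m)$.
\item Part 2 shifts $R$ down by one, so it follows from Part 1 applied to $d$. The hypothesis $r>k'+1$ is exactly what keeps the smaller effective state non-trivial.
\item Part 3 needs $d(R+1,m+1)<d(R,m)$.
\end{itemize}

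\textbf{Step 3 (properties on the diagonal).} Let $W(R,m):=V(R,m,m)$ and $\Delta(R,m):=W(R,m)-W(R,m-1)$. Then $d(R,m)=\Delta(R-1,m)$, with the conventions from the paper covering the boundary cases. I will prove, by joint induction on $R$, that:
\begin{enumerate}
\item[(a)] $\Delta(R,m)$ is strictly increasing in $R$;
\item[(b)] $\Delta(R+1,m+1)<\Delta(R,m)$.
\end{enumerate}
Property (b) is a discrete concavity of the value in checks along the diagonal, with the shift in $R$ accounting for the extra position. I also expect to need (c), $\Delta(R,m)\le 1$, so that all cutoffs lie in $[0,1]$.

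The inductive step uses Step 1 in the form $W(R,m)=W(R-1,m)+\varphi(\Delta(R-1,m))$. This gives
\[
\Delta(R,m)=\Delta(R-1,m)+\varphi\bigl(\Delta(R-1,m)\bigr)-\varphi\bigl(\Delta(R-1,m-1)\bigr).
\]
The diagonal state $(R-1,m-1,m-1)$ must be handled via Step 2, since $W(R,m-1)$ is a state with $k'=m>m'=m-1$. The key auxiliary fact is that $x\mapsto x+\varphi(x)=\mathbb{E}[\max(T,x)]$ is increasing with slope $F(x)\in[0,1]$. This makes the map $(x,y)\mapsto x+\varphi(x)-\varphi(y)$ monotone in the right directions. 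The induction hypotheses on the ordering of $\Delta(R-1,\cdot)$ then propagate to $\Delta(R,\cdot)$. Strictness comes from $\varphi$ being strictly decreasing together with $F(x)<1$ on $[0,1)$.

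The main obstacle is (b). The monotonicity argument above is clean for (a), but for (b) the two recursions involve differences evaluated at different effective horizons. Because of this, I will need the slope bound $0\le F\le 1$ and the hypothesis $k'>m'$, which guarantees the relevant states are off the $0$-check boundary where $c=1$. My first attempt is to prove (a) and (b) simultaneously as a single chain of inequalities
\[
\Delta(R,m+1)\le \Delta(R-1,m)\le \Delta(R,m),
\]
in the spirit of classical multi-secretary monotonicity arguments. If that fails, I will compute the $m=1$ case explicitly to calibrate before attempting general $m$.
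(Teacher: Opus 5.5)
Your argument is correct. It reorganises the paper's proof rather than reproducing it. The paper never collapses to the diagonal. It keeps all three indices and uses the recursion $c(r,k',m')=T\bigl(c(r-1,k',m'),c(r-1,k'-1,m'-1)\bigr)$ with $T(a,b)=a+H(a)-H(b)$, where $H$ is your $\varphi$. It proves the position and check comparisons at level $r$ from level $r-1$ by strict monotonicity of $T$ in each argument. It then derives the agent comparison from those two, via the betweenness property $a<T(a,b)<b$ applied with $c(r,k'-1,m'-1)>c(r,k',m')$.

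Your Step 2 is the identity $c(r,k',m')=c(r-1,k'-1,m')$ for $k'>m'$ from the proof of the main theorem, iterated. It shows that the cutoff depends on $(r,k')$ only through $r-k'$. So Part 2 is Part 1 re-indexed, and only two claims on a two-parameter family remain. Your map $G(x,y)=x+\varphi(x)-\varphi(y)$ is exactly the paper's $T$ restricted to the diagonal. You gain economy and a transparent reason why agents and positions move the cutoff in opposite directions. The paper's version gains independence from Statement (ii) of the main theorem and stays in the original state variables.

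The step you flag as the main obstacle is not one. You have $\Delta(R+1,m+1)=G\bigl(\Delta(R,m+1),\Delta(R,m)\bigr)$ and $\Delta(R,m)=G\bigl(\Delta(R-1,m),\Delta(R-1,m-1)\bigr)$. The two argument-wise comparisons are $\Delta(R,m+1)\le\Delta(R-1,m)$ and $\Delta(R,m)\le\Delta(R-1,m-1)$, and both are instances of (b) one level down. So (b) propagates on its own exactly as (a) does, using only that $G$ is nondecreasing in each argument, and your proposed chain closes immediately.

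The care is needed in strictness at the boundaries:
\begin{enumerate}
\item[(i)] In (b), the first comparison is an equality when both states are terminal (both cutoffs equal $0$). Strictness must then come from the second argument, and when $m=1$ that argument is the no-check value $1$. So your (c) must be the \emph{strict} bound that every non-trivial cutoff lies below $1$. This follows because $G(x,y)<1$ on $[0,1]^2$ unless $x=y=1$.
\item[(ii)] In (a) with $m=1$, the second argument is pinned at $1$ on both sides, and the recursion reads $d(R+1,1)=\int_0^1\max(t,d(R,1))\,dF(t)$. Strictness comes from the first argument through $F(x)>0$ for $x>0$, not from $F<1$ as you state.
\end{enumerate}

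Two smaller corrections. First, $W(R,m-1)=V(R,m-1,m-1)$ is the value at a diagonal state, so the $\Delta$-recursion uses only Step 1; Step 2 is needed just once, to translate the three parts. Second, the hypothesis $k'>m'$ in Part 3 merely makes $(r,k',m'+1)$ a legitimate state. The $m=0$ instance of (b), namely that one-check cutoffs lie strictly below $1$, is used in the induction rather than avoided.
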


\begin{proof}
We first derive the recursive formula for cutoffs that will be used to show the comparative statics. Fix a non-terminal state $(r,k',m')$ with $r>k'\geq m'\geq 1$. Let
$a:=c(r-1,k',m')$ and $b:=c(r-1,k'-1,m'-1).$
 Define
\[
H(x):=\int_x^1 t\,dF(t)-(1-F(x))x.
\]
Using the value recursion and the cutoff equation, we have
\[
V(r,k',m')-V(r-1,k',m')=H(c(r,k',m'))
\]
for all non-trivial states with $m'\geq 1$. The same identity also holds at the
no-check boundary if we use the convention $c(r,k',0)=1$, because then
\[
V(r,k',0)-V(r-1,k',0)=0=H(1).
\]
Therefore,
\[
\begin{aligned}
c(r,k',m')
&=V(r-1,k',m')-V(r-1,k'-1,m'-1)\\
&=\left[V(r-2,k',m')-V(r-2,k'-1,m'-1)\right]\\
&\quad+\left[V(r-1,k',m')-V(r-2,k',m')\right]\\
&\quad-\left[V(r-1,k'-1,m'-1)-V(r-2,k'-1,m'-1)\right]\\
&=a+H(a)-H(b).
\end{aligned}
\]
We define
\[
T(a,b) =
a+H(a)-H(b)
=
F(a)a+\int_a^b t\,dF(t)+(1-F(b))b
\]
Thus, the recursive formula for the cut-offs is given by:
\[
c(r,k',m')
=
T\left(c(r-1,k',m'),c(r-1,k'-1,m'-1)\right).
\]

We record two elementary properties of $T$. First, $T$ is strictly increasing
in each argument. Indeed, if $0\leq a_1<a_2\leq b$, then integration by parts gives
\[
T(a_2,b)-T(a_1,b)=\int_{a_1}^{a_2} F(x)\,dx>0,
\]
where the strict inequality follows from full support. Similarly, if
$a\leq b_1<b_2\leq 1$, then
\[
T(a,b_2)-T(a,b_1)=\int_{b_1}^{b_2} (1-F(x))\,dx>0.
\]
Second, if $a<b$, then $T(a,b)$ lies strictly between $a$ and $b$:
\[
a<T(a,b)<b.
\]
To see this, note that
\[
T(a,b)-a
=
\int_a^b (t-a)\,dF(t)+(b-a)(1-F(b))>0,
\]
and
\[
b-T(a,b)
=
(b-a)F(a)+\int_a^b (b-t)\,dF(t)>0.
\]

We now prove the comparative statics by induction on $r$. The boundary values
$c(k',k',m')=0$ and $c(r,k',0)=1$ initialise the recursion. The recursive formula
above, together with the fact that $T(a,b)$ lies strictly between $a$ and $b$ whenever
$a<b$, implies in particular that all non-trivial cutoffs lie strictly between $0$ and
$1$.

Assume that the comparative statics with respect to positions and checks have been
established for states with $r-1$ remaining agents. We first prove the comparative
static with respect to positions at states with $r$ remaining agents. Fix
$r>k'+1$ and $k'\geq m'\geq 1$. By the recursive formula,
\[
c(r,k',m')
=
T\left(c(r-1,k',m'),c(r-1,k'-1,m'-1)\right),
\]
whereas
\[
c(r,k'+1,m')
=
T\left(c(r-1,k'+1,m'),c(r-1,k',m'-1)\right).
\]
By the induction hypothesis, or by the boundary convention when the comparison
state is terminal,
\[
c(r-1,k',m')>c(r-1,k'+1,m').
\]
Moreover,
\[
c(r-1,k'-1,m'-1)\geq c(r-1,k',m'-1),
\]
with strict inequality whenever $m'>1$; when $m'=1$, both terms are equal to the
no-check boundary value $1$. Since $T$ is strictly increasing in each argument, and
the first inequality is strict, it follows that
\[
c(r,k',m')>c(r,k'+1,m').
\]
This proves the comparative static with respect to the number of positions.

Next fix $r>k'\geq m'+1\geq 2$. By the recursive formula,
\[
c(r,k',m')
=
T\left(c(r-1,k',m'),c(r-1,k'-1,m'-1)\right),
\]
and
\[
c(r,k',m'+1)
=
T\left(c(r-1,k',m'+1),c(r-1,k'-1,m')\right).
\]
By the induction hypothesis, or by the terminal boundary convention if
$r-1=k'$,
\[
c(r-1,k',m')\geq c(r-1,k',m'+1).
\]
Also, by the induction hypothesis applied to the state with $k'-1$ positions,
\[
c(r-1,k'-1,m'-1)>c(r-1,k'-1,m').
\]
The second inequality is strict. Since $T$ is increasing in both arguments and
strictly increasing in the second one, we obtain
\[
c(r,k',m')>c(r,k',m'+1).
\]
This proves the comparative static with respect to the number of verification checks.

It remains to prove the comparative static with respect to the number of remaining
agents. Fix a non-trivial state with $r>k'\geq m'\geq 1$. By the recursive formula,
\[
c(r+1,k',m')
=
T\left(c(r,k',m'),c(r,k'-1,m'-1)\right).
\]
We claim that
\[
c(r,k'-1,m'-1)>c(r,k',m').
\]
If $m'=1$, this follows from the boundary convention
$c(r,k'-1,0)=1$ and the fact that every non-trivial cutoff lies strictly below $1$.
If $m'>1$, then the two comparative statics already established at the state with
$r$ remaining agents imply
\[
c(r,k'-1,m'-1)>c(r,k',m'-1)>c(r,k',m').
\]
Thus, in all cases,
\[
c(r,k'-1,m'-1)>c(r,k',m').
\]
Since $T(a,b)$ lies strictly between $a$ and $b$ whenever $a<b$, we have
\[
c(r+1,k',m')
=
T\left(c(r,k',m'),c(r,k'-1,m'-1)\right)
>
c(r,k',m').
\]
This proves the comparative static with respect to the number of remaining agents
and completes the induction.
\end{proof}

\section{Conclusion} \label{sec:7}
The main finding of this paper is that when the number of available objects exceeds the remaining checks, a sequential threshold mechanism—sometimes granting items probabilistically without verification to preserve screening capacity—emerges as optimal. This mechanism balances the trade-off between selective allocation in the present and conserving verification resources for the future.
In future work, we aim to examine the strategic interactions that arise when the principal does not commit to disclosing whether a check is available. Another important direction is to study the optimal sequential allocation of heterogeneous objects, as heterogeneity introduces new trade-offs and complexities into the allocation problem.
\section{Appendix}\label{sec:appendix}

\begin{lemma}\label{eqn:extra}
$V(i, k', m') \geq V(i,k',m'-1)$
where $i \geq 1$, $1\leq k' \leq k$ and $1 \leq m' \leq m$.
\end{lemma}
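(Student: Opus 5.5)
The natural route is a \emph{replication} (or free-disposal) argument, which avoids any dependence on Theorem~\ref{Thm1} or the explicit form of the value function. This matters because the proof of Theorem~\ref{Thm1} itself invokes Lemma~\ref{eqn:extra}, so using that characterisation here would be circular.

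Fix a state $(i,k',m')$ with $m'\geq 1$, and let $(p^*,q^*)$ be an optimal incentive-compatible mechanism from the state $(i,k',m'-1)$ onward. Optimality exists because, by backward induction over finitely many states, each stage problem is a maximisation of a bounded linear functional over a weak$^*$-compact set. I will build a mechanism starting from $(i,k',m')$ that mimics $(p^*,q^*)$ and simply never uses one of its $m'$ checks.

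Formally, map every state $(j,k'',m'')$ reachable from $(i,k',m'-1)$ to the state $(j,k'',m''+1)$ reachable from $(i,k',m')$, and set
\[
\tilde p(t|(j,k'',m''+1)) := p^*(t|(j,k'',m'')), \qquad \tilde q(t|(j,k'',m''+1)) := q^*(t|(j,k'',m'')).
\]
Three properties of $(\tilde p,\tilde q)$ need checking.
\begin{enumerate}[label=(\alph*)]
\item \emph{Feasibility.} Along every path the new mechanism has one more check available than the original, so it never verifies when no check remains. When the original is in a state with $m''=0$, it allocates irrespective of reports; the copy can do the same, which is a special case of $\tilde p=0$ and $\tilde q=1$.
\item \emph{Incentive compatibility.} The IC constraints are imposed state by state and involve only $p(\cdot|s)$ and $q(\cdot|s)$ at a single state, so they transfer verbatim.
\item \emph{Same distribution of outcomes.} The transition probabilities between the image states coincide with those of the original, because the map sends $(j+1,k''-1,m''-1)$, $(j+1,k''-1,m'')$ and $(j+1,k'',m'')$ to the correspondingly shifted states. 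Hence the joint law of (agent, allocation) pairs is identical under both mechanisms, and the objective \eqref{object} takes the same value.
\end{enumerate}
It follows that $V(i,k',m')\geq$ the payoff of $(\tilde p,\tilde q)$ $=V(i,k',m'-1)$.

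The main point requiring care is the boundary handling. First, when the original mechanism reaches a terminal state where remaining agents equal remaining positions, both mechanisms allocate to everyone, so they agree. Second, the shifted states $(j,k'',m''+1)$ may have $m''+1>k''$. To keep within the model's state space, I will interpret such states as having a redundant check, defining $V(j,k'',k''+1)=V(j,k'',k'')$ as the paper does in Part~(i) of the proof of Theorem~\ref{Thm1}.

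As a sanity check, I would also run a direct backward-induction proof on the number of remaining agents. Assume $V(j+1,\cdot,m'')\geq V(j+1,\cdot,m''-1)$ for all continuation states. Then the stage objective $\Pi(t)$ under any fixed $(p,q)$ is weakly larger with $m'$ checks than with $m'-1$, since each continuation value weakly increases. Taking the supremum over the same IC-feasible set, which is independent of $m'$ provided $m'-1\geq 1$, gives the result. When $m'-1=0$, the mechanism with $p=0,\ q=1$ (full allocation) is available at $(i,k',m')$ and yields $k'\mu=V(i,k',0)$, so the inequality holds there as well.
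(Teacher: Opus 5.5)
Your proposal is correct and is essentially the paper's argument: the paper's policy $\psi$ is your shift-by-one copy of the optimal mechanism from $(i,k',m'-1)$, and that copy alone already delivers the weak inequality the lemma states. The paper adds one step aimed at strict inequality when agents outnumber positions: it first shows $V(i,k',1)>V(i,k',0)$ using a ``verify above $\mu$'' mechanism, and then uses the optimal one-check mechanism at the states where the original mechanism has run out of checks; the lemma as stated does not need this refinement.
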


    \begin{proof}
Fix state $s = (i, k', m')$ where the number of remaining agents is strictly greater than the available positions. We first claim that $V(i, k', 1) > V(i, k', 0)$ 
where $i \leq n$ and $k' \leq k$. The optimal mechanism at state $(i, k', 0)$ is to 
allocate the position randomly to any of the $k'$ remaining agents. Hence, 
$V(i, k', 0) = k'\mu$.

Consider the following mechanism (that may not be optimal) at state $(i, k', 1)$. 
The arriving agent is asked to report her type. If the reported type $t_i$ is above 
$\mu$, then the agent is verified and allocated the position if the report is 
truthful. Otherwise, the agent is not verified and not allocated the position. Based 
on the realisation of $t_i$, either state $(i+1, k', 1)$ or state 
$(i+1, k'-1, 0)$ is reached. Irrespective of the state reached, the policy randomly 
allocates the remaining positions without verifying, even if a verification check is 
available. The expected value if the principal follows this policy is:
\[
    F(\mu)\,k'\mu + (1 - F(\mu))\bigl((k'-1)\mu + \mathbb{E}(t_i \mid t_i > \mu)\bigr)
\]
which is strictly greater than $k'\mu$. Hence, we have $V(i, k', 1) > V(i, k', 0)$.

Now let $\varphi$ be the optimal mechanism for the principal at state $(i, k', m'-1)$, 
such that $\varphi(s_{i''})$ denotes the mechanism at state $s_{i''} = 
(i'', k'', m'')$ where $(i'', k'', m'')$ is a state that can be reached when the 
principal employs the optimal policy starting from state $(i, k', m'-1)$. Now, 
consider a policy $\psi$ at state $(i, k', m')$ (that may not be optimal) such that 
$\psi(s_{i''}) = \varphi(i'', k'', m''-1)$ for all states $(i'', k'', m'')$ where 
$m'' \geq 2$, and $\psi(i'', k'', 1)$ is the optimal mechanism at state 
$(i'', k'', 1)$. As shown above, $V(i'', k'', 1) > V(i'', k'', 0)$ for all 
$i'' \geq i$ and $k'' \leq k'$. This implies that the expected value at state 
$(i, k', m')$ from the policy $\psi$ is strictly greater than $V(i, k', m'-1)$. 
Hence,
\[
    V(i, k', m') > V(i, k', m'-1)
\]
If the number of remaining agents equals the number of remaining positions, then
every remaining agent must receive a position. Hence verification is redundant, and
\[
    V(i,k',m')=V(i,k',m'-1)=k'\mu.
\]
\end{proof}

\bibliography{bibligraphy.bib}
\bibliographystyle{apalike}
\end{document}